\DeclarePairedDelimiter{\ceil}{\lceil}{\rceil}
\DeclarePairedDelimiter{\floor}{\lfloor}{\rfloor}
\spnewtheorem{obs}[theorem]{Observation}{\bfseries}{\rmfamily}
\begin{document}
\title{The Connected Domination Number of Grids
\thanks{This research was supported by the first author's INSPIRE fellowship from Department of Science and Technology (DST), Govt. of India}}
%
\author{Adarsh Srinivasan\inst{1}\orcidID{0000-0003-0288-4818} \and N S Narayanaswamy\inst{2}\orcidID{0000-0002-8771-3921} }
\authorrunning{Adarsh Srinivasan \and N S Narayanaswamy}
%
\institute{Indian Institute of Science Education and Research (IISER) Pune \and
Indian Institute of Technology (IIT) Madras\\ 
\email{adarsh.srinivasan@students.iiserpune.ac.in}\\
\email{swamy@cse.iitm.ac.in}}
\maketitle          
\begin{abstract}
Closed form expressions for the domination number of an $n \times m$ grid have attracted significant attention, and an exact expression has been obtained in 2011~\cite{griddomset1}. In this paper, we present our results on obtaining new lower bounds on the connected domination number of an $n \times m$ grid. The problem has been solved for grids with up to $4$ rows and with $6$ rows and the best currently known lower bound for arbitrary $m,n$ is $\ceil*{\frac{mn}{3}}$~\cite{gridbounds}. Fujie~\cite{FUJIE20031931} came up with a general construction for a connected dominating set of an $n \times m$ grid. In this paper, we investigate whether this construction is indeed optimum. We prove a new lower bound of $\ceil*{\frac{mn+2\ceil*{\frac{\min \{m,n\}}{3}}}{3}}$ for arbitrary $m,n \geq 4$.

\keywords{Connected Dominating Set  \and Maximum Leaf Spanning Tree \and Grid Graph \and Connected Domination Number}
\end{abstract}
\section{Introduction}
In this paper, we study the \textsc{Minimum Connected Dominating Set} (\textsc{Min-CDS}) problem in grid graphs. Given a connected graph $G=(V,E)$, a \emph{connected dominating set} (CDS) is a subset $S$ of $V$ which induces a connected subgraph in $G$ such that every vertex of $G$ is either in $S$ or adjacent to a vertex in $S$. The \textsc{Min-CDS} problem asks for a CDS of minimum size. This is a well studied problem in combinatorial optimisation. The \emph{connected domination number} of $G$ is the minimum size of a connected dominating set of $G$. This problem is equivalent to the \textsc{Maximum Leaf Spanning Tree} (\textsc{MLST}) problem, which is the problem of finding a spanning tree of $G$ with maximum number of leaves. A graph has a spanning tree with $k$ leaves if and only if it has a connected dominating set of size $|V|-k$. These problems are known to be NP-complete~\cite[ND2, Appendix 2]{garey}, and have been widely studied and have applications in areas such as networking, circuit layout, etc (See \cite{uses1} for example). A common theme in the study of any NP-complete problem is to consider the problem in special classes of inputs with more structure than the general case and try to understand whether the problem remains NP-complete or admits a polynomial time solution. \textsc{Min-CDS} is known to be NP-complete when it is restricted to planar bipartite graphs of maximum degree $4$~\cite{bipartiteNP}. It is also known to be NP-complete for unit disk graphs~\cite{lichtenstein1982planar} and subgraphs of grid graphs~\cite[Theorem 6.1]{CLARK1990165}. When viewed in terms of approximation algorithms, the minimum connected dominating set and maximum leaf spanning tree problems are not equivalent. The MLST problem is MAX-SNP-hard which makes a Polynomial Time Approximation Scheme (PTAS) unlikely~\cite{MLSTApprox}, but linear time $3$-approximation algorithms~\cite{lu1998approximating} and $2$-approximation algorithms~\cite{solis20172} exist. A PTAS exists for the \textsc{Min-CDS} problem on unit disk graphs~\cite{du2012polynomial,hunt1998nc}. The complexity of this problem for complete grid graphs remains unknown.

By comparison, the computation of the domination number of an $n \times m$ grid graph is a well studied problem. Chang~\cite{chang} devoted their PhD thesis to calculating the domination number of grids and Gonçalves et al~\cite{griddomset1} solved this problem in 2011 by proving it to be $\floor*{\frac{(n+2)(m+2)}{5}}-4$ for $16 \leq m \leq n$. Hence it is natural to ask the following questions about the connected domination number of grid graphs:
\begin{itemize}
    \item Can we come up with a closed form expression for the connected domination number of an $n \times m$ grid?
    \item Can we design an algorithm that takes $n,m$ as input, with run-time polynomial in $n,m$ that outputs the domination number of an $n \times m$ grid?
\end{itemize}
An answer for the first question would imply an answer for the second one, but not vice versa. A partial answer can be obtained by showing lower and upper bounds on the connected domination number. Upper bounds can be obtained using constructions or heuristic algorithms. Fujie~\cite{FUJIE20031931} came up with a general construction of a spanning tree of a grid with a large number of leaves which leads to an upper bound on the connected domination number. Li and Tolouse~\cite{gridbounds} determined the optimum maximum leaf spanning tree for grid graphs with up to 4 rows and with 6 rows. The only known general lower bound is $\ceil*{\frac{mn}{3}}$. This was obtained by Li and Tolouse using an easy counting argument and by Fujie using a mathematical programming approach. 

In this paper, we come up with improved lower bounds on the connected domination number of a grid. We show a lower bound of $\ceil*{\frac{mn+2\ceil*{\frac{\min \{m,n\}}{3}}}{3}}$ for arbitrary $m,n \geq 4$ (Theorem \ref{mainthm}). To our knowledge, this is the first non-trivial result of this kind. Our proof also leads to some insight on the structure of the optimum connected dominating set of a grid. 
\subsection{Preliminaries and Terminology}
We first introduce the definitions and notations that we will use in the rest of the paper. For all definitions and notations not defined here, we refer to~\cite{west}. Let $G=(V,E)$ be a connected graph. A leaf refers to a vertex of degree $1$ in $G$. The open neighbourhood of a subset $S$ of $V$ in $G$ is defined to be the set of all vertices adjacent to a vertex in $S$ which are not in $S$ and denoted by $N_G(S)$. The closed neighbourhood of $S$ in $G$, $N_G[S]$ is defined to be $N_G(S) \cup S$. $G[S]$ denotes the subgraph of $G$ induced by $S$. A set $S$ is called a connected dominating set of $G$ if $N_G[S]=V$ and $G[S]$ is a connected subgraph of $G$. The size of the minimum connected dominating set of $G$ is called its connected domination number and is denoted by $\gamma_c(G)$. The maximum leaf number of $G$ is the number of leaves in the maximum leaf spanning tree of $G$. The connected domination number and the maximum leaf number add up to $|V|$. A connected dominating set of $G$ can be obtained by deleting the leaves of a spanning tree of $G$.

The notation $[i]$ denotes the set $\{1,2,\dots,i\}$. The $n \times m$ grid graph $G_{n,m}$ is the graph with the vertex set $[n] \times [m]$ with two vertices $(i_1,j_1)$ and $(i_2,j_2)$ being adjacent if and only if $|i_i-i_2|+|j_1-j_2|=1$. It can also be defined as a unit disk graph in which the disks have the integer points mentioned as centers and radius $1/2$. For the reminder of the paper, we just use $G$ instead of $G_{n,m}$ without any ambiguity. We assume whenever necessary that $G$ is embedded in a larger grid graph. Specifically, we embed $G$ in $G'$ which is a grid graph with two additional rows and columns. The vertex set of $G'$ is $\{0,1,\dots,n+1\} \times \{0,1,\dots,m+1\}$ with the same incidence relation as $G$.

$S$ is a connected dominating set of $G$. $l$ is the number of leaves in the graph $G[S]$. It has no relation to the number of leaves in the corresponding spanning tree of $G$. For any $v \in G$ we define the \emph{loss function} of that vertex to be $\ell(v)=|N[v] \cap S|-1$. The loss function of the set $S$ is defined to be $\ell(S)=\sum_{v \in N_{G'}[S]}\ell(v)$. The \emph{boundary} of $G_{n,m}$ is defined to be the set of points in $G_{n,m}$ which have three neighbours or less in $G_{n,m}$ (excluding the points themselves). The \emph{excess function} $e(S)$ is defined to be the number of points in $S$ present in the boundary of $G$. These definitions are inspired by similar definitions in~\cite{griddomset1}.
\section{Bounds on the connected domination number}
\subsection{Known upper bounds}\label{const}
Upper bounds for $\gamma_c(G)$ can be easily obtained by constructing spanning trees for $G$ with a large number of leaves, which leads to an upper bound for the maximum leaf number of $G$, and a corresponding lower bound on the connected domination number. Fujie gave a construction of a spanning tree with a large number of leaves~\cite[Lemma 2]{FUJIE20031931}. We reproduce their construction here:

Let $D_1$ be a CDS of $G_{n,m}$ with the following vertices- 
\begin{align*}
    &(1,2),(2,2),\dots,(n,2)\\
    &(1,m-1),(2,m-1),\dots (n,m-1)\\
    &(2,3),(2,4),\dots,(2,m-2)\\
    &(i,3k+2) \text{ for } i=3,4,\dots,n,k=1,2,\dots,\floor*{\frac{m-4}{3}}
\end{align*}
Let $D_2$ be a CDS with the following vertices-
\begin{align*}
    &(2,1),(2,2),\dots,(2,m)\\
    &(n-1,1),(n-1,2),\dots (n-1,m)\\
    &(3,2),(4,2),\dots,(m-2,2)\\
    &(3k+2,i) \text{ for } k=1,2,\dots,\floor*{\frac{n-4}{3}},i=3,4,\dots,m
\end{align*}
Hence, we have the following upper bound on $\gamma_c(G)$:
\begin{equation} \label{fujie}
    \gamma_c(G) \leq \min \left\{2n+(m-4)+\floor*{\frac{m-4}{3}}(n-2), 2m+(n-4)+\floor*{\frac{n-4}{3}}(m-2) \right\} 
\end{equation} 
Fig. \ref{cons} describes an example for the constructions for the graph $G_{7,11}$.
\begin{figure}
    \centering
    \begin{tikzpicture}[scale=0.40]
    \foreach \i in {0,...,10}
    \foreach \j in {0,...,6}{
    \draw (\i,\j) circle(3pt); }
    \draw[gray] (0,0) grid (10,6);
    \foreach \i in {0,...,10}
    \foreach \j in {0,...,6}{
    \fill[white] (\i,\j) circle(3pt); }
    \foreach \i in {0,...,6}{
    \fill[black] (1,\i) circle(3pt);  }
    \foreach \i in {0,...,6}{
    \fill[black] (9,\i) circle(3pt);  }
    \foreach \i in {2,...,8}{
    \fill[black] (\i,1) circle(3pt);  }
    \foreach \i in {2,...,6}{
    \fill[black] (4,\i) circle(3pt);  }
     \foreach \i in {2,...,6}{
    \fill[black] (7,\i) circle(3pt);  }
    \end{tikzpicture}
    \qquad
    \qquad
    \begin{tikzpicture}[scale=0.40]
    \foreach \i in {0,...,10}
    \foreach \j in {0,...,6}{
    \draw (\i,\j) circle(3pt); }
    \draw[gray] (0,0) grid (10,6);
    \foreach \i in {0,...,10}
    \foreach \j in {0,...,6}{
    \fill[white] (\i,\j) circle(3pt); }
    \foreach \i in {0,...,10}{
    \fill[black](\i,1) circle(3pt);}
    \foreach \i in {0,...,10}{
    \fill[black](\i,5) circle(3pt);}
    \foreach \i in {2,...,10}{
    \fill[black](\i,4) circle(3pt);}
    \foreach \i in {2,...,4}{
    \fill[black](1,\i) circle(3pt);}
    \end{tikzpicture}
    \caption{$D_1$ and $D_2$ for $G_{7,11}$ (Black vertices present in CDS)}
    \label{cons}
\end{figure}
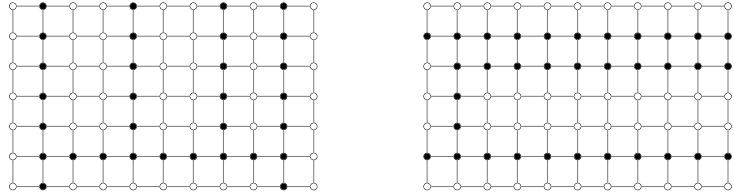
\subsection{New lower bounds}
Proving lower bounds would require combinatorial arguments, which will be the main contribution of this paper. In \cite{griddomset1} the authors introduced a combinatorial parameter called the loss function to prove lower bounds on the domination number of a grid. The loss and excess functions which we have defined are inspired by that definition. Using these parameters, we prove a sequence of lower bounds on $|S|$, each an improvement on the previous one, culminating in theorem \ref{mainthm}.

Our approach is to obtain lower bounds in $\ell(S)$ and $e(S)$ parametrized by $l$ and then combine them to obtain absolute lower bounds on $\ell(S)+e(S)$. This will, in turn lead to lower bounds on $|S|$.
	
$G[S]$ can be divided into a number of horizontal and vertical line segments, with each vertical line segment connected to at least one horizontal line segment and vice versa as the graph is connected. The vertices, called \emph{joins} where a horizontal line segment meets a vertical line segment can be of degree $2$, $3$ or $4$.  When the join is of degree $2$, we refer to the horizontal line segment and vertical line segment that meet at the vertex as a \emph{bend}.
	
Let $d_3$ and $d_4$ be the number of vertices in $G[S]$ with degree $3$ and $4$ respectively and $d_2$ be the number of bends. Note that here $d_2$ counts only those degree $2$ vertices which form a bend. We make the following observation on the number of leaves in $G[S]$:
\begin{lemma} \label{handshake}
    For any CDS $S$ of $G$: $l \leq d_3+2d_4+2 $
\end{lemma}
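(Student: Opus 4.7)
The proof I would write is essentially a handshake-lemma computation combined with the fact that a connected graph on $N$ vertices has at least $N-1$ edges, so I would not need to invoke the segment/bend structure at all for this particular inequality.

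First I would observe that every vertex of $G$ has degree at most $4$, so the same is true in the induced subgraph $G[S]$. Thus every vertex of $G[S]$ has degree in $\{1,2,3,4\}$. Let $V_i$ denote the number of vertices of degree $i$ in $G[S]$, so that $V_1=l$, $V_3=d_3$, $V_4=d_4$, and $V_2$ is the total number of degree-$2$ vertices of $G[S]$ (which is $\geq d_2$, but this is not needed here).

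Next I would apply the handshake identity
\begin{equation*}
V_1 + 2V_2 + 3V_3 + 4V_4 \;=\; 2\,|E(G[S])|.
\end{equation*}
Because $S$ is a connected dominating set, $G[S]$ is connected, hence
\begin{equation*}
|E(G[S])| \;\geq\; |V(G[S])| - 1 \;=\; V_1+V_2+V_3+V_4-1.
\end{equation*}
Substituting into the handshake identity gives
\begin{equation*}
V_1 + 2V_2 + 3V_3 + 4V_4 \;\geq\; 2V_1 + 2V_2 + 2V_3 + 2V_4 - 2,
\end{equation*}
which simplifies directly to $V_1 \leq V_3 + 2V_4 + 2$, i.e.\ $l \leq d_3 + 2d_4 + 2$, as required.

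There is no real obstacle in this proof; the only thing worth noting is that the inequality is tight exactly when $G[S]$ is a tree, since the slack $d_3+2d_4+2-l$ equals twice the cyclomatic number $|E(G[S])|-|V(G[S])|+1$ of $G[S]$. I would mention this explicitly because in later sections the authors will likely want to keep track of when $G[S]$ is a tree versus when it contains extra edges, and this reformulation ($l = d_3+2d_4+2-2c$ with $c$ the number of independent cycles in $G[S]$) is the natural handle on that slack.
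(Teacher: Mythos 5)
Your proof is correct and follows essentially the same route as the paper's: apply the handshake identity to $G[S]$, use connectedness to get $|E(G[S])|\geq |S|-1$, and simplify. The remark about the slack being twice the cyclomatic number is a nice addition but not needed here.
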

\begin{proof}
    The well known handshake lemma states that for a graph $G=(V,E)$ $\sum_{v \in V} \delta(v)=2|E|$. We apply this for $G[S]$. The number of vertices of degree $2$ is $|S|-d_3-d_4-l$ and as $G[S]$ is connected, the number of edges in $G[S]$ is at least $|S|-1$.
    \begin{align*}
        3d_3+4d_4+2(|S|-d_3-d_4-l)+l=2|E| \geq 2(|S|-1)&\\
        \implies l \leq d_3+2d_4+2&
    \end{align*}
    \qed
\end{proof}
We now relate the parameters $\ell(S)$ and $e(S)$ to the size of a connected dominating set:
\begin{lemma}
For any $m,n\geq 3$, $G_{n,m}$ has a minimum CDS that does not contain any corner of $G_{n,m}$ 
\end{lemma}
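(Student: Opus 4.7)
The approach is a local swap. Start with any minimum CDS $S$; if $S$ contains a corner, I will produce another minimum CDS containing strictly fewer corners, and iterating over the four corners yields the desired CDS. By the symmetry of the grid, assume $(1,1) \in S$. Since $m,n \geq 3$ gives $|V(G)|\geq 9$ while a single vertex dominates at most $5$ vertices, we have $|S|\geq 2$, and the connectivity of $G[S]$ forces at least one of $(1,2),(2,1)$ (the only $G$-neighbors of $(1,1)$) to lie in $S$. The candidate replacement is the diagonal vertex $(2,2)$, which is not a corner when $m,n\geq 3$.

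I would first rule out the case $(2,2)\in S$ by showing that in that case $S\setminus\{(1,1)\}$ is already a CDS, contradicting minimality. Domination survives because $(1,2),(2,1)$ are both dominated by $(2,2)$, and $(1,1)$ is dominated by whichever of $(1,2),(2,1)$ lies in $S$. Connectivity survives because $(1,1)$'s only $G$-neighbors, $(1,2)$ and $(2,1)$, are both adjacent to $(2,2)\in S\setminus\{(1,1)\}$, so deleting $(1,1)$ from $G[S]$ cannot disconnect any pair of the remaining vertices.

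With $(2,2)\notin S$ established, I would define $S'=(S\setminus\{(1,1)\})\cup\{(2,2)\}$, so $|S'|=|S|$. Every vertex previously dominated only by $(1,1)$ must lie in $N_G[(1,1)]=\{(1,1),(1,2),(2,1)\}$; in $S'$ the last two are dominated by $(2,2)$ while $(1,1)$ is dominated by its in-$S$ neighbor. For connectivity, any walk in $G[S]$ through $(1,1)$ enters and leaves $(1,1)$ via $(1,2)$ or $(2,1)$, and both of these are adjacent to $(2,2)$, so the walk can be rerouted through $(2,2)$ in $G[S']$. Hence $S'$ is a minimum CDS with one fewer corner.

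Applying the analogous diagonal swap template to the remaining corners, using $(2,m-1),(n-1,2),(n-1,m-1)$ as replacements respectively, strictly decreases the corner count at each step and terminates after at most four applications. The only delicate point -- and the reason for choosing the diagonal vertex specifically -- is that $(2,2)$ simultaneously dominates both neighbors of the corner and bridges them, so removing the corner is safe for domination and connectivity at once; a swap along a single axis would fail one of the two conditions.
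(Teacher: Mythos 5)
Your proof is correct, but it takes a genuinely different route from the paper's. You perform a purely local diagonal swap: after ruling out $(2,2)\in S$ by observing that $S\setminus\{(1,1)\}$ would then already be a CDS (contradicting minimality), you exchange the corner $(1,1)$ for $(2,2)$ and check domination and connectivity, which only requires inspecting the three vertices of $N_G[(1,1)]$ and rerouting paths through the $4$-cycle $(1,1),(1,2),(2,2),(2,1)$. The paper instead takes the maximal horizontal (or vertical) segment of $G[S]$ starting at the corner, say $(1,1),(1,2),\dots,(1,k)$, finds the first point $(1,i)$ attached to the rest of $S$ via $(2,i)$, and slides the entire prefix $(1,j)\mapsto(2,j)$ for $j<i$ into the second row. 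The trade-offs: your argument is more local and its correctness check is shorter (only three vertices can lose domination), but it splits into two cases and leans on minimality of $S$ in the first one; the paper's slide needs no case split and no appeal to minimality -- it converts an \emph{arbitrary} CDS containing a corner into an equal-sized one avoiding it -- at the cost of verifying domination and connectivity along a whole prefix of the segment. Both arguments yield the lemma as stated, and your observation that the replacement vertex $(2,2)$ is never a corner when $m,n\geq 3$ plays the same role as the paper's remark that $(2,1)$ is not a corner.
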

\begin{proof}
Consider a CDS $S$ of $G_{n,m}$ which contains the corner point $(1,1)$. As $G[S]$ is a connected subgraph, $S$ must contain either $(1,2)$ or $(2,1)$. There exists a maximal horizontal or vertical line segment in $G[S]$ containing $(1,1)$. Assume $G[S]$ contains the path $(1,1),(1,2),\dots, (1,k)$ As $G[S]$ is connected, one of these points must contain a neighbour in $S$. Let $(1,i)$ be the first such point with a neighbour $(2,i)$. Now, for all $j < i$, we replace $(1,j)$ in $S$ with $(2,j)$ to obtain a new CDS of $G$ with the same number of points. As $n,m \geq 4$, $(2,1)$ which replaces $(1,1)$ is not a corner point. We perform a similar procedure with the path $(1,1),(2,1),\dots,(k,1)$ if $S$ does not contain the point $(1,2)$, and repeat this for all four corner points of $G$ to obtain a new CDS for $G_{n,m}$ with the same number of points as $S$. 
\qed
\end{proof}
\begin{lemma} \label{bnd}
    For a CDS $S$ of $G$: $nm = 5|S|-\ell(S)-e(S)$
\end{lemma}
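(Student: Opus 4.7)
The identity is a direct double count once the bookkeeping at corners has been handled. By the previous lemma I may assume, without loss of generality, that $S$ avoids the four corners of $G$; without this reduction a corner in $S$ would throw the identity off.

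The key move is to evaluate the sum $\sum_{s\in S}|N_G[s]|$ in two ways. Interchanging the order of summation and using that $S$ dominates $V(G)$, one obtains
\[
\sum_{s\in S}|N_G[s]|=\sum_{v\in V(G)}|N_G[v]\cap S|=\sum_{v\in V(G)}\bigl(\ell(v)+1\bigr)=nm+\ell(S),
\]
where the last equality uses the fact that every vertex $v\in N_{G'}[S]\setminus V(G)$ lies on the outer frame of $G'$ and has a unique $G$-neighbor, forcing $|N_{G'}[v]\cap S|=1$ and so $\ell(v)=0$; hence these outer-frame terms contribute nothing to $\ell(S)$ and the sum collapses to the one over $V(G)$.

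Evaluating the same quantity from the other side, $\sum_{s\in S}|N_G[s]|=|S|+\sum_{s\in S}\deg_G(s)$. A vertex of $S$ has $G$-degree $4$ in the interior, $3$ on a non-corner boundary edge, and $2$ at a corner; the corner-free assumption rules out the last case, so $\sum_{s\in S}\deg_G(s)=4|S|-e(S)$, and the total becomes $5|S|-e(S)$. Equating the two evaluations yields $nm+\ell(S)=5|S|-e(S)$, which is precisely the claimed identity. The sole non-routine step is the corner reduction: a corner in $S$ would exhibit a degree deficit of $2$ while contributing only $1$ to $e(S)$, so the identity as stated would fail by the number of corners in $S$; this is exactly why the previous lemma must be invoked first.
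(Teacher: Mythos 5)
Your proof is correct and is essentially the paper's argument: both are the same double count of domination incidences between $S$ and the vertices it dominates, with the corner-free reduction from the previous lemma guaranteeing that the boundary correction is exactly $e(S)$. The only cosmetic difference is that you stay inside $G$ and record that correction as the degree deficit of boundary vertices of $S$, whereas the paper works in the padded grid $G'$ and records it as the $e(S)$ points of $N_{G'}[S]$ lying outside $G$ --- these are the same fact, and your observation that $\ell(v)=0$ on the outer frame is exactly what reconciles the two bookkeepings.
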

\begin{proof}
For a set $S$ which dominates $G$, consider the set $N_{G'}[S]$, which is its closed neighbourhood in $G'$. Any point in $S$ dominates $5$ points including itself, and for each $v \in N_{G'}[S]$, the number of points which dominate $v$ is $1+\ell(v)$. Hence, $|N_{G'}[S]|=5|S|-\ell(S)$.

As no point in $S$ is a corner point of $G$ (see previous lemma), every point in the boundary of $G$ dominates exactly one point in $G'$ outside $G$. Thus the number of points in $N_{G'}[S]$ outside $G$ is $e(S)$. Hence, $mn=|N_{G'}[S]|-e(S)$ and this proves the lemma.\qed
\end{proof}
\begin{figure}[ht] 
	\centering
	\begin{tikzpicture}
	\foreach \i in {0,...,2}
	\foreach \j in {0,...,2}{
		\draw (\i,\j) circle(3pt);}
	\fill[black] (1,1) circle(3pt);
	\fill[black] (2,1) circle(3pt);
	\fill[black] (0,1) circle(3pt);
	\fill[black] (1,0) circle(3pt);
	\fill[black] (1,0) circle(3pt);
	\draw (1,1) circle(3pt) node[black, above left] {\textbf{P}};
	\draw (0,0) circle(3pt) node[black, above right] {\textbf{Q}};
	\draw (2,0) circle(3pt) node[black, above left] {\textbf{R}};
	\draw (0,1)--(1,1);
	\draw (1,0)--(1,1);
	\draw (1,1)--(2,1);
	\end{tikzpicture}
	\quad
	\quad
	\begin{tikzpicture}
	\foreach \i in {0,...,2}
	\foreach \j in {0,...,2}{
		\draw (\i,\j) circle(3pt);}
	\fill[black] (1,1) circle(3pt);
	\fill[black] (2,1) circle(3pt);
	\fill[black] (0,1) circle(3pt);
	\fill[black] (1,0) circle(3pt);
	\fill[black] (1,2) circle(3pt);
	\draw (1,1) circle(3pt) node[black, above left] {\textbf{P}};
	\draw (0,0) circle(3pt) node[black, above right] {\textbf{Q}};
	\draw (2,0) circle(3pt) node[black, above left] {\textbf{R}};
	\draw (0,2) circle(3pt) node[black, below right] {\textbf{S}};
	\draw (2,2) circle(3pt) node[black, below left] {\textbf{T}};
	\draw (0,1)--(1,1);
	\draw (1,0)--(1,1);
	\draw (1,1)--(2,1);
	\draw (1,1)--(1,2);
	\end{tikzpicture}
	\quad 
	\quad
	\begin{tikzpicture}
    \foreach \i in {0,...,2}
    \foreach \j in {0,...,2}{
    \draw (\i,\j) circle(3pt);}
    \fill[black] (1,1) circle(3pt);
    \fill[black] (1,0) circle(3pt);
    \fill[black] (0,2) circle(3pt);
    \fill[black] (1,2) circle(3pt);
    \draw (0,1) circle(3pt) node[black, above right] {\textbf{P}};
    \draw (0,2)--(1,2);
    \draw (1,2)--(1,1);
    \draw (1,1)--(1,0);
    \end{tikzpicture}
	\caption{The three different types of joins}
	\label{joins}
\end{figure}
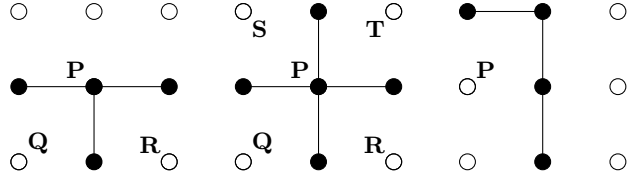
\begin{lemma} \label{main}
    $\ell(S) \geq 2|S|-l+d_2+3d_3+6d_4$ and $e(S) \geq 4$ if $m,n \geq 4$ 
\end{lemma}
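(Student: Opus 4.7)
The plan is to establish the two bounds by separate arguments. For $e(S) \geq 4$, I will use the preceding lemma to assume $S$ contains none of the four corners of $G$. Each corner of $G$ has exactly two neighbours, both of which lie on the boundary of $G$, and since $m, n \geq 4$ the four pairs of corner-neighbours are pairwise disjoint. To dominate each of the four corners, $S$ must therefore contain at least one of its two boundary neighbours, giving four distinct boundary vertices in $S$ and hence $e(S) \geq 4$.

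For the loss bound, I will begin from the decomposition
\begin{equation*}
\ell(S) \;=\; \sum_{v \in S} \deg_{G[S]}(v) \;+\; \sum_{v \in N_{G'}(S) \setminus S} \bigl(|N_{G'}(v) \cap S| - 1\bigr),
\end{equation*}
which is immediate from $\ell(v) = |N[v] \cap S| - 1$ (a vertex $v \in S$ contributes its $G[S]$-degree, while a vertex $v$ outside $S$ that is dominated $k$ times contributes $k-1$). The first sum equals $2|E(G[S])|$; applying the handshake lemma on $G[S]$ together with the identity $|S| = l + d_2^{\mathrm{tot}} + d_3 + d_4$ (where $d_2^{\mathrm{tot}}$ denotes the total number of degree-$2$ vertices, including both bends and straight pass-throughs) simplifies this piece to $2|S| - l + d_3 + 2 d_4$, independent of how the degree-$2$ vertices split between bends and straight segments. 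It therefore suffices to show that the second sum, which I will call the \emph{excess}, is at least $d_2 + 2 d_3 + 4 d_4$.

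To establish the excess bound, I will charge local contributions to the joins of $G[S]$ through the diagonal cells of $G'$ adjacent to each join, following Fig.~\ref{joins}: a bend contributes at least $1$ through the unique cell diagonally opposite the bend (which is dominated by both arms of the bend); a degree-$3$ join contributes at least $2$ through its two diagonal cells (the vertices labelled $Q$ and $R$ in the first panel); and a degree-$4$ join contributes at least $4$ through all four diagonal cells ($Q$, $R$, $S$ and $T$ in the second panel). Adding the resulting excess lower bound to the handshake contribution gives the claimed inequality. The main obstacle is the case in which the diagonal witnesses either lie in $S$ themselves---for instance when $G[S]$ contains a $2\times 2$ block---or are shared between several joins; here I will argue that at a shared outside witness $v$ the multiplicity $a_v = |N_{G'}(v)\cap S|$ grows proportionally, so that $a_v - 1$ absorbs the combined charges additively, and that configurations in which too many witnesses are absorbed into $S$ are ruled out by the connectivity of $G[S]$ together with the corner-avoidance property supplied by the preceding lemma.
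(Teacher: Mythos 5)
Your proposal follows essentially the same route as the paper: the corner argument for $e(S) \geq 4$ is the paper's argument, and the loss bound is obtained exactly as in the paper by charging each $v \in S$ its degree in $G[S]$ (yielding the $2|S|-l+d_3+2d_4$ piece) and then collecting $d_2 + 2d_3 + 4d_4$ further units from the diagonal neighbours of bends and joins as in Fig.~\ref{joins}. The overlap and absorption difficulties you flag at the end are genuine, but the paper's own proof does not treat them either --- it simply sums the local contributions --- so your sketch is, if anything, more cautious than the published argument; note only that your proposed resolution for witnesses lying in $S$ is shaky, since a $2\times 2$ block inside $S$ is not excluded by connectivity or corner-avoidance and would have to be handled by a direct accounting rather than ruled out.
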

\begin{proof}
Consider the four corners of the grid, the points $(1,1), (n,1), (1,m)$ and $(n,m)$. These points have to be dominated by a point in $S$, and all their neighbours in $G$ are in the boundary of $G$. As both $n$ and $m$ are greater than or equal to $4$, two corner points cannot be dominated by the same point in $S$, and hence $e(S) \geq 4$. 

For any point $v$ in $S$ which is not a leaf in $G[S]$, $\ell(v) \geq 2$. This is because it has at least $2$ neighbours in $S$. If $v$ is a leaf, $\ell(v)=1$. Hence, if $l$ is the number of leaves of $S$, $\ell(S) \geq 2|S|-l$. In addition to this, consider a vertex of degree $3$ as in Fig. \ref{joins}. The loss function of the point $P$ is at least $3$ and the loss function of $Q$ and $R$ is at least $1$. For the vertex of degree $4$ shown, the loss function of $P$ is at least $4$ and the loss functions of $Q,R,S$ and $T$ are each at least $1$. For the bend shown, the loss function of the point $P$ is at least $1$. Hence, $\ell(S) \geq 2|S|-l+d_2+3d_3+6d_4$. \qed
\end{proof}
Putting these observations together, we get our first bound on $e(S)+\ell(S)$:
\begin{lemma}[Parametrized bound 1]
	Consider any CDS $S$ for $G$, with $G[S]$ having $l$ leaves. Then $\ell(S)+e(S) \geq 2|S|+2l-2$.
\end{lemma}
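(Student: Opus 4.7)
The plan is to combine the three results already proved, namely Lemma~\ref{handshake}, Lemma~\ref{bnd} (implicitly), and Lemma~\ref{main}, to get the claimed bound. Concretely, I expect the bound to fall out immediately by adding the estimates from Lemma~\ref{main} and substituting the handshake inequality.

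First I would invoke Lemma~\ref{main} in its two parts: $\ell(S) \geq 2|S| - l + d_2 + 3d_3 + 6d_4$ and $e(S) \geq 4$. Adding these two estimates yields
\[
\ell(S) + e(S) \;\geq\; 2|S| - l + d_2 + 3d_3 + 6d_4 + 4.
\]
The next step is to absorb the join counts. Since $d_2 \geq 0$, it may simply be dropped. For the remaining $3d_3 + 6d_4 = 3(d_3 + 2d_4)$, I would apply Lemma~\ref{handshake}, which gives $d_3 + 2d_4 \geq l - 2$, so that $3d_3 + 6d_4 \geq 3l - 6$. Substituting back yields
\[
\ell(S) + e(S) \;\geq\; 2|S| - l + (3l - 6) + 4 \;=\; 2|S| + 2l - 2,
\]
which is exactly the claimed parametrized bound.

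There is essentially no obstacle here: the work has already been done in Lemmas~\ref{handshake} and~\ref{main}, and this lemma is just the arithmetic consolidation of them. The only thing worth double-checking is the direction of the handshake inequality (we want a lower bound on $d_3 + 2d_4$ in terms of $l$, and Lemma~\ref{handshake} supplies precisely $l - 2 \leq d_3 + 2d_4$), and the fact that $d_2$ being nonnegative lets us discard it without affecting the lower bound. Hence the proof reduces to a one-line computation once the two earlier lemmas are cited.
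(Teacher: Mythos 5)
Your proposal is correct and matches the paper's own proof exactly: both combine Lemma~\ref{main} (dropping the nonnegative $d_2$ term) with the handshake bound $d_3+2d_4 \geq l-2$ from Lemma~\ref{handshake} to obtain $\ell(S)+e(S) \geq 2|S|-l+3(l-2)+4 = 2|S|+2l-2$. The only difference is that you spell out the arithmetic that the paper leaves implicit.
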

\begin{proof}
This follows from the fact that $\ell(S)+e(S) \geq 2|S|-l+3d_3+6d_4+4$ (lemma \ref{main})  and $d_3+2d_4 \geq l-2$ (lemma \ref{handshake}). \qed
\end{proof}
From this bound it is easy to derive the already known lower bound of $\ceil*{\frac{mn}{3}}$ for $\gamma_c(G)$:
\begin{theorem}[Bound 1]
    For any CDS $S$ of $G$:
    \begin{equation*}
        |S| \geq \ceil*{\frac{mn}{3}}
    \end{equation*}
\end{theorem}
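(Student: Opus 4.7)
The plan is to combine Lemma~\ref{bnd} with the Parametrized Bound 1 that was just proved. Lemma~\ref{bnd} gives the identity $nm = 5|S| - \ell(S) - e(S)$, which rearranges as $\ell(S) + e(S) = 5|S| - nm$. The Parametrized Bound states $\ell(S) + e(S) \geq 2|S| + 2l - 2$, where $l$ is the number of leaves in $G[S]$. Substituting one into the other yields $5|S| - nm \geq 2|S| + 2l - 2$, equivalently $3|S| \geq nm + 2l - 2$.

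In the main case where $G[S]$ has at least one leaf (i.e.\ $l \geq 1$), this immediately gives $3|S| \geq nm$, and since $|S|$ is an integer we conclude $|S| \geq \ceil*{\frac{mn}{3}}$, as required. The only edge case is $l = 0$, in which every vertex of $G[S]$ has degree at least $2$ in $G[S]$. Then, by the argument used in the proof of Lemma~\ref{main}, each $v \in S$ contributes $\ell(v) \geq 2$ to $\ell(S)$, so $\ell(S) \geq 2|S|$; together with $e(S) \geq 4$ from the same lemma, this gives $nm \leq 5|S| - 2|S| - 4 = 3|S| - 4$, which is strictly stronger than required.

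There is no genuine obstacle in this argument: it is a direct algebraic combination of the two previously established inequalities, together with a minor case split to cover the possibility that $G[S]$ is leafless. The final ceiling is free because $|S|$ is an integer.
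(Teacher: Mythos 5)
Your proof is correct and follows essentially the same route as the paper: combine Lemma~\ref{bnd} with Parametrized Bound~1 and use integrality of $|S|$. The only difference is that the paper simply asserts that $G[S]$ has at least $2$ leaves (which is not immediate, since $G[S]$ need not be a tree and could in principle be leafless), whereas your case split between $l\geq 1$ and $l=0$ handles this point more carefully.
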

\begin{proof}
As $S$ must have at least $2$ leaves, $\ell(S)+e(S) \geq 2|S|+2$. We can now use this in lemma \ref{bnd}:
\begin{align*}
    mn \leq 5|S|-2|S|-2=3|S|-2\\
    \implies |S| \geq \ceil*{\frac{mn}{3}}
\end{align*}\qed
\end{proof}
We have not used any structural information on $G[S]$ yet. Specifically, we have not used the fact that it might contain bends. Next, we use the fact that the connected dominating set must contain a certain minimum number of joins to show a new bound on $\ell(S)+e(S)$. We first prove the following simple lemma on the structure of $G[S]$. We say that the horizontal line segments \emph{span} the height of the graph if the subgraph induced by their closed neighbourhood contains a point from every row of $G$ and we say that the vertical line segments \emph{span} the width of the graph if the subgraph induced by their closed neighbourhood contains a point from every column of $G$.
\begin{lemma}
	$G[S]$ either has at least $\ceil*{\frac{n}{3}}$ horizontal line segments which span the width of $G$ or $\ceil*{\frac{m}{3}}$ vertical line segments which span the height of $G$.
\end{lemma}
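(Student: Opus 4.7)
The plan is to prove the lemma by contradiction: assume the conclusion fails and produce a vertex of $G$ that $S$ does not dominate.

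First I would classify each vertex of $S$ by whether it lies on some horizontal segment of $G[S]$, on some vertical segment, or on both. Write $H$ for the first set and $V$ for the second; joins and bends lie in $H\cap V$, while leaves of $G[S]$ lie in exactly one of them. Because $G[S]$ is connected and $|S|\ge 2$, every vertex of $S$ has a neighbour in $S$, so $S=H\cup V$. This bookkeeping is the only subtle point, and it is what guarantees that the case analysis below is exhaustive.

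Next, suppose towards contradiction that the closed neighbourhood of the horizontal segments misses some row $r$ (so no vertex of $H$ has row index in $\{r-1,r,r+1\}$) and that the closed neighbourhood of the vertical segments misses some column $c$ (so no vertex of $V$ has column index in $\{c-1,c,c+1\}$). Consider the grid vertex $(r,c)$. Any dominator of it in $S$ sits at one of the five positions $(r,c),(r\pm1,c),(r,c\pm1)$, hence either in a row within $\{r-1,r,r+1\}$ or in a column within $\{c-1,c,c+1\}$. In the first case, such a dominator is not in $H$, so it must be in $V$; but it also lies in a column within $\{c-1,c,c+1\}$, contradicting the choice of $c$. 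The second case is symmetric. Hence $(r,c)$ has no dominator in $S$, contradicting that $S$ is a CDS.

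Therefore the closed neighbourhood of the horizontal segments covers every row of $G$, or that of the vertical segments covers every column. In the first case each horizontal segment lies in a single row and so its closed neighbourhood meets at most three rows of $G$, forcing at least $\lceil n/3\rceil$ horizontal segments to cover all $n$ rows; the other case yields $\lceil m/3\rceil$ vertical segments by the symmetric argument. The main obstacle is really just Step 1 (making sure every vertex of $S$ is accounted for in $H\cup V$); after that, the domination contradiction and the pigeonhole count are immediate.
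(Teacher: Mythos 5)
Your proof is correct and takes essentially the same approach as the paper: the paper argues that a row missed by the closed neighbourhoods of the horizontal segments must have all $m$ of its vertices dominated by vertical segments, each of which covers at most three columns, whereas you isolate the single vertex at the intersection of a missed row and a missed column --- the same pigeonhole idea in a slightly different packaging. Your explicit bookkeeping that $S=H\cup V$ is a point the paper leaves implicit, but it does not change the argument.
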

\begin{proof}
	Any horizontal line segment dominates an area that spans at most three rows. Hence, if $S$ has less than $\ceil*{\frac{n}{3}}$ horizontal line segments, there exists at least one row which is not dominated by any of the points in the horizontal line segments. The rows not dominated by horizontal line segments must be dominated by the vertical line segments. Any vertical line segment can dominate an area which spans at most $3$ columns and hence there must be at least $\ceil*{\frac{m}{3}}$ vertical line segments in $S$. Similarly, if $S$ has less than $\ceil*{\frac{m}{3}}$ vertical line segments, it must have at least $\ceil*{\frac{n}{3}}$ horizontal line segments. \qed
\end{proof}
We use this lemma to prove the following bound on the number of joins in $G[S]$.
\begin{lemma}\label{joins1}
Let $d_2$, $d_3$ and $d_4$ denote the number of bends, joins of degree $3$ and joins of degree $4$ respectively.
\begin{equation*}
    d_2+d_3+d_4 \geq \ceil*{\frac{\min \{m,n\}}{3}}
\end{equation*}
\end{lemma}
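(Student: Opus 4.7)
The plan is to decompose $G[S]$ into its maximal horizontal and vertical line segments (each containing at least one edge) and bound the joins via a tree-counting argument on the bipartite intersection graph of these segments. Let $h$ denote the number of maximal horizontal line segments in $G[S]$ and $v$ the number of maximal vertical line segments. I would form a bipartite \emph{skeleton graph} $\mathcal{T}$ whose vertices are these $h+v$ segments, with an edge joining a horizontal segment $H$ to a vertical segment $V$ whenever they share a vertex of $G[S]$. Since any fixed row meets any fixed column in a unique cell, each such pair shares at most one vertex; moreover a vertex shared by $H$ and $V$ has both a horizontal and a vertical neighbour in $S$ and is therefore a join, and conversely every join lies in exactly one horizontal segment and one vertical segment. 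Hence the edges of $\mathcal{T}$ are in bijection with the joins, giving $|E(\mathcal{T})| = d_2 + d_3 + d_4$.

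Next I would show $\mathcal{T}$ is connected. Given segments $s_1,s_2$, pick representatives $u_1 \in s_1, u_2 \in s_2$ and a $u_1$-$u_2$ path in the connected graph $G[S]$. Each maximal monochromatic (all-horizontal or all-vertical edged) stretch of this path lies within a single segment, and every vertex at which the path switches direction has both a horizontal and a vertical neighbour in $S$ and is therefore a join; by construction of $\mathcal{T}$ it contributes an edge between the segment being left and the segment being entered. Chaining these edges, one obtains a walk in $\mathcal{T}$ from $s_1$ to $s_2$, so $\mathcal{T}$ is connected on $h+v$ vertices, yielding
\[
d_2 + d_3 + d_4 = |E(\mathcal{T})| \geq h + v - 1.
\]

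Finally I would combine this with the preceding lemma, which guarantees $h \geq \lceil n/3 \rceil$ or $v \geq \lceil m/3 \rceil$, together with the observation that $h \geq 1$ and $v \geq 1$ whenever $m,n \geq 4$: for if, say, $v = 0$ then $G[S]$ is a single horizontal path by connectivity, dominating at most three rows and contradicting $n \geq 4$; the other case is symmetric. Assume without loss of generality that $\min\{m,n\}=n$. If $h \geq \lceil n/3 \rceil$, then $d_2+d_3+d_4 \geq h + v - 1 \geq \lceil n/3 \rceil + 1 - 1 = \lceil n/3 \rceil$; otherwise $v \geq \lceil m/3 \rceil \geq \lceil n/3 \rceil$, and once again $d_2+d_3+d_4 \geq v + h - 1 \geq \lceil n/3 \rceil = \lceil \min\{m,n\}/3 \rceil$.

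The main obstacle I expect is keeping the segment/join bookkeeping airtight: I need to verify carefully that at a degree-$3$ join the ``passing'' segment and the ``terminating'' segment are both well-defined and together contribute exactly one edge of $\mathcal{T}$, and to handle the base cases of the connectivity argument for $\mathcal{T}$ when the chosen representatives $u_1,u_2$ are themselves joins lying in several segments at once.
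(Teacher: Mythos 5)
Your argument is correct, and it reaches the bound by a genuinely different counting mechanism than the paper's. The paper assumes, without loss of generality, that the vertical segments span the width of the grid, so by the preceding lemma there are at least $\ceil*{\frac{m}{3}}$ of them; it then observes that each such vertical segment must be attached to a point in an adjacent column (else $G[S]$ would be disconnected), hence contains at least one join, and since distinct maximal vertical segments are vertex-disjoint these joins are pairwise distinct. That charges each join to the unique vertical segment containing it and gives the bound immediately. You instead build the bipartite intersection graph $\mathcal{T}$ of all maximal segments, identify its edges with the joins, and use connectivity of $G[S]$ to get $d_2+d_3+d_4 \geq h+v-1$, finishing with $h,v \geq 1$ and the preceding lemma. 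Your intermediate inequality is at least as strong as the paper's (since $h+v-1 \geq \max\{h,v\}$ when both are positive), at the cost of the extra bookkeeping you already flag: the well-definedness of the segment containing each monochromatic stretch, and the case where a chosen representative is itself a join lying in two segments. Neither is a real obstacle, since two maximal segments of the same orientation sharing a vertex must coincide. The one hypothesis worth stating explicitly is that every vertex of $S$ lies in at least one segment containing an edge (true because $|S| \geq 2$ when $m,n \geq 4$), which you need both for the connectivity of $\mathcal{T}$ and to apply the dichotomy of the preceding lemma to your $h$ and $v$.
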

\begin{proof}
Either the horizontal line segments dominate an area that spans the entire height of the graph or the vertical line segments dominate an area that spans the width of the graph. We assume the latter without loss of generality. Each of these vertical line segments must be connected to a point in the previous or next column in the grid. Hence, each of these must contain either a bend, a degree $3$ join or a degree $4$ join. From the previous lemma, we know that there are at least $\ceil*{\frac{m}{3}}$ vertical line segments and the result follows. \qed
\end{proof}
We use these two lemmas to obtain two new lower bounds on $\ell(S)+e(S)$ parametrized by the number of leaves, which we can then combine to obtain an improved lower bound for $|S|$.
\begin{lemma}[parametrized bound 2]\label{pb2}
	Consider any CDS $S$ for $G$, with $G[S]$ having $l$ leaves. Then
	$\ell(S)+e(S) \geq 2|S|+\ceil*{\frac{\min \{m,n\}}{3}}+l$
\end{lemma}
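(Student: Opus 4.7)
The plan is to simply combine the three lemmas already proved: Lemma \ref{main}, Lemma \ref{handshake}, and Lemma \ref{joins1}. Starting from Lemma \ref{main}, I already have $\ell(S)+e(S) \geq 2|S|-l+d_2+3d_3+6d_4+4$, so the whole task reduces to showing that
\begin{equation*}
d_2+3d_3+6d_4+4 \geq \ceil*{\tfrac{\min\{m,n\}}{3}}+2l.
\end{equation*}

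The key observation is the algebraic identity
\begin{equation*}
d_2+3d_3+6d_4 = (d_2+d_3+d_4) + 2(d_3+2d_4) + d_4,
\end{equation*}
which neatly splits the expression into exactly the quantities bounded below by the other two lemmas. Lemma \ref{joins1} yields $d_2+d_3+d_4 \geq \ceil*{\frac{\min\{m,n\}}{3}}$, and Lemma \ref{handshake} rearranges to $d_3+2d_4 \geq l-2$. Plugging both into the identity and dropping the nonnegative term $d_4$ gives $d_2+3d_3+6d_4 \geq \ceil*{\frac{\min\{m,n\}}{3}} + 2(l-2)$, and adding $4$ to both sides produces exactly the inequality we need.

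Substituting back into the bound from Lemma \ref{main} immediately gives $\ell(S)+e(S) \geq 2|S|+\ceil*{\frac{\min\{m,n\}}{3}}+l$, which is the claimed inequality. There is no real obstacle here because all the hard combinatorial work has already been packaged into the earlier lemmas; the only mild subtlety is spotting the right linear combination of the two join inequalities, which the decomposition $d_2+3d_3+6d_4 = (d_2+d_3+d_4) + 2(d_3+2d_4) + d_4$ makes transparent. In particular, no case analysis on whether $d_3+d_4$ exceeds $\ceil*{\frac{\min\{m,n\}}{3}}$ is needed, since the identity is valid unconditionally and both bounds can be applied in parallel.
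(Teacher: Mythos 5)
Your proof is correct and follows essentially the same route as the paper: both start from Lemma \ref{main}, absorb $d_2+d_3+d_4 \geq \ceil*{\frac{\min\{m,n\}}{3}}$ from Lemma \ref{joins1}, and bound the leftover $2d_3+4d_4$ (plus a discarded $d_4$) by $2(l-2)$ via Lemma \ref{handshake}. Your explicit identity $d_2+3d_3+6d_4 = (d_2+d_3+d_4)+2(d_3+2d_4)+d_4$ is just a cleaner way of writing the paper's two-step substitution.
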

\begin{proof}	
We know that $e(S) \geq 4$ and $\ell(S) \geq 2|S|-l+3d_3+6d_4+d_2$, and because $d_2 +d_3+d_4 \geq \ceil*{\frac{\min \{m,n\}}{3}}$, 
\begin{equation*}
    \ell(S) \geq 2|S|-l+\ceil*{\frac{\min \{m,n\}}{3}}+2d_3+5d_4 \geq 2|S|-l+\ceil*{\frac{\min \{m,n\}}{3}}+2(d_3+2d_4+2)-4
\end{equation*}
We now use the fact that $l \leq d_3+2d_4+2$ to prove the lemma \qed
\end{proof}
\begin{lemma}[parametrized bound 3] \label{pb3}
	Consider any CDS $S$ for $G$, with $G[S]$ having $l$ leaves. Then
	$\ell(S)+e(S) \geq 2|S|+2\ceil*{\frac{\min \{m,n\}}{3}}+2-l$
\end{lemma}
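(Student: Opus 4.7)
The plan is to strengthen Lemma~\ref{joins1} via a sharper count of segment endpoints in $G[S]$, establishing the inequality $d_2 + d_3 \geq 2k - l$ where $k := \ceil*{\frac{\min\{m,n\}}{3}}$. Combined with the handshake bound $d_3 + 2d_4 \geq l - 2$ from Lemma~\ref{handshake} and the loss lower bound from Lemma~\ref{main}, this will yield the claim.

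First I will invoke the previous lemma on spanning segments to assume, without loss of generality, that the vertical line segments of $G[S]$ span the width of $G$, giving $v \geq k$ maximal vertical segments, each of length at least two. I will then count the $2v$ endpoints of these segments. Every such endpoint is either (i) a leaf of $G[S]$ whose incident edge is vertical (total $L_v \leq l$), (ii) a bend (each bend has vertical degree exactly $1$ and is hence the endpoint of exactly one vertical segment, so bends contribute $d_2$ in total), or (iii) a degree-$3$ join whose vertical degree is $1$ (total $d_3^v \leq d_3$). Degree-$3$ joins of vertical degree $2$ together with all degree-$4$ joins lie in the interior of a vertical segment and contribute nothing to the endpoint count. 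Summing yields the identity $2v = L_v + d_2 + d_3^v$, from which $d_2 + d_3 \geq 2v - l \geq 2k - l$.

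Next I will decompose $d_2 + 3d_3 + 6d_4 = (d_2 + d_3) + 2(d_3 + 2d_4) + 2d_4$. For $l \geq 2$, substituting the new inequality and $d_3 + 2d_4 \geq l - 2$ gives $d_2 + 3d_3 + 6d_4 \geq (2k - l) + 2(l-2) = 2k + l - 4 \geq 2k - 2$; for $l \leq 1$, the first summand alone suffices, since $d_2 + 3d_3 + 6d_4 \geq d_2 + d_3 \geq 2k - l \geq 2k - 1 \geq 2k - 2$. Thus $d_2 + 3d_3 + 6d_4 \geq 2k - 2$ in every case, and substituting into $\ell(S) \geq 2|S| - l + d_2 + 3d_3 + 6d_4$ from Lemma~\ref{main} while using $e(S) \geq 4$ produces $\ell(S) + e(S) \geq 2|S| + 2k + 2 - l$, as claimed.

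The hard part will be the endpoint-counting identity in the second paragraph. It requires a careful partition of the vertex types occurring at the tops and bottoms of vertical segments and, in particular, splitting the count of degree-$3$ joins according to whether their vertical degree is $1$ (hence a vertical-segment endpoint) or $2$ (hence interior on the vertical side). Once this identity is in place, the rest of the argument is a mechanical algebraic manipulation combined with the two earlier lemmas.
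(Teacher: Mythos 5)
Your proof is correct, and it follows the paper's overall skeleton (reduce to the orientation whose segments span the grid, lower-bound the join counts, feed the result into $\ell(S) \geq 2|S|-l+d_2+3d_3+6d_4$ together with $e(S)\geq 4$), but the combinatorial core is executed differently. The paper argues line by line: every vertical line contains a join, at least $\ceil*{\frac{\min\{m,n\}}{3}}-d_3-d_4$ of them contain no degree-$3$ or degree-$4$ join and hence carry either two bends or one bend and a leaf, giving $d_2 \geq 2\left(\ceil*{\frac{\min\{m,n\}}{3}}-d_3-d_4\right)-l$. You instead double-count the $2v$ endpoints of the maximal vertical segments and classify each as a vertical leaf, a bend, or a degree-$3$ join of vertical degree one, obtaining $d_2+d_3 \geq 2\ceil*{\frac{\min\{m,n\}}{3}}-l$; combined with your decomposition $d_2+3d_3+6d_4=(d_2+d_3)+2(d_3+2d_4)+2d_4$ and the handshake bound, this closes the argument. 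The two intermediate inequalities are not identical, but both reduce to the same final estimate $d_2+3d_3+6d_4\geq 2\ceil*{\frac{\min\{m,n\}}{3}}-2$; your version is arguably cleaner because the endpoint count is an exact identity and you need not separate the lines containing degree-$3$ or degree-$4$ joins from those that do not (at the small cost of the case split on $l\leq 1$, which the paper's chain avoids by using $d_3+2d_4+2\geq l$ directly). One point you should make explicit: the identity $2v=L_v+d_2+d_3^v$ requires that every maximal vertical segment contains a vertical edge (hence has two distinct endpoints) and that each bend, each vertically attached leaf, and each degree-$3$ join of vertical degree one is the endpoint of exactly one such segment; this matches the paper's implicit convention for its segment decomposition, but it is doing real work in your argument and deserves a sentence of justification.
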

\begin{proof}
	We can assume without loss of generality that the vertical lines span the width of the grid. Every vertical line must contain a join. Hence, at least $\ceil*{\frac{m}{3}}-d_3-d_4$ of them must have one or more bends. A vertical line with only one bend must also contain a leaf. This implies that the number of bends is at least $2\left(\ceil*{\frac{\min \{m,n\}}{3}}-d_3-d_4\right)-l$. We use this in our estimation of $\ell(S)$: 
	\begin{align*}
	    \ell(S) \geq & 2|S|-l+d_2+3d_3+6d_4 \\
	    \geq & 2|S|-l+3d_3+6d_4+2\left(\ceil*{\frac{\min \{m,n\}}{3}}-d_3-d_4\right)-l  \\
	    \geq & 2|S|+2\ceil*{\frac{\min \{m,n\}}{3}} +d_3+2d_4+2-2l-2\\
	    \geq & 2|S| +2\ceil*{\frac{\min \{m,n\}}{3}}-l-2 
	\end{align*}
	Using the fact that $e(S) \geq 4$, the result follows.\qed
\end{proof}
Combining the previous two parametrized bounds leads to the following lower bound on $|S|$ which is an improvement over the currently known bound of $|S| \geq \ceil*{\frac{mn}{3}}$:
\begin{theorem}[Bound 2]
	For a CDS $S$ of $G$:
	\begin{equation*}
	    |S| \geq \ceil*{\frac{mn+\ceil*{\frac{3}{2}\ceil*{\frac{\min \{m,n\}}{3}}}+1}{3}}
	\end{equation*}
\end{theorem}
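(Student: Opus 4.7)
The plan is to combine Lemma \ref{bnd} with the two parametrized bounds just proved. Notice that parametrized bound 2 is strong when $l$ is large and weak when $l$ is small, while parametrized bound 3 behaves oppositely (it has a $-l$ rather than $+l$). Neither can be used on its own to eliminate the dependence on $l$, so the natural step is to average the two inequalities. Adding Lemma \ref{pb2} and Lemma \ref{pb3} yields
\begin{equation*}
2(\ell(S)+e(S)) \geq 4|S| + 3\ceil*{\tfrac{\min\{m,n\}}{3}} + 2,
\end{equation*}
and dividing by $2$ gives $\ell(S)+e(S) \geq 2|S| + \tfrac{3}{2}\ceil*{\tfrac{\min\{m,n\}}{3}} + 1$, with the $l$ terms canceling exactly.

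Next, since $\ell(S)+e(S)$ is an integer, I would round up the right-hand side to obtain
\begin{equation*}
\ell(S)+e(S) \geq 2|S| + \ceil*{\tfrac{3}{2}\ceil*{\tfrac{\min\{m,n\}}{3}}} + 1.
\end{equation*}
Feeding this into the identity $mn = 5|S|-\ell(S)-e(S)$ from Lemma \ref{bnd} yields
\begin{equation*}
mn \leq 5|S| - 2|S| - \ceil*{\tfrac{3}{2}\ceil*{\tfrac{\min\{m,n\}}{3}}} - 1 = 3|S| - \ceil*{\tfrac{3}{2}\ceil*{\tfrac{\min\{m,n\}}{3}}} - 1.
\end{equation*}
Rearranging and taking ceilings (since $|S|$ is an integer) gives the stated bound.

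The main obstacle I anticipate is not really a conceptual one, since the algebra above is routine; rather, it is making sure that the ceiling manipulations are handled cleanly. In particular, one should justify that passing from the rational lower bound $\tfrac{3}{2}\ceil*{\tfrac{\min\{m,n\}}{3}}+1$ to its ceiling is legal because $\ell(S)+e(S)\in\mathbb{Z}$, and that the final division by $3$ is compatible with taking the outermost ceiling. Aside from this bookkeeping, the proof is a direct combination of Lemmas \ref{bnd}, \ref{pb2}, and \ref{pb3}.
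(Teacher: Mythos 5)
Your proposal is correct and follows essentially the same route as the paper: both combine Lemma \ref{pb2} and Lemma \ref{pb3} to cancel the dependence on $l$ (you by averaging the two inequalities, the paper by taking their pointwise maximum, which yields the same value $2|S|+\tfrac{3}{2}\ceil*{\frac{\min\{m,n\}}{3}}+1$), then apply the ceiling using integrality and substitute into Lemma \ref{bnd}. Your explicit justification of the ceiling steps is in fact slightly more careful than the paper's.
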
 
\begin{proof}
The lower bound in lemma \ref{pb2} increases with $l$ and the bound in lemma \ref{pb3} decreases with $l$. As they both lower bound $\ell(S)+e(S)$, $\ell(S)+e(S)$ is always greater than or equal to $2|S|+\ceil*{\frac{3}{2}\ceil*{\frac{\min \{m,n\}}{3}}}+1$. 
From Lemma \ref{bnd},
\begin{equation*}
    mn = 5|S|-\ell(S)-e(S) \geq 5|S| -\left( 2|S|+\ceil*{\frac{3}{2}\ceil*{\frac{\min \{m,n\}}{3}}}+1 \right)
\end{equation*}
This means that $3|S| \geq mn+\ceil*{\frac{3}{2}\ceil*{\frac{\min \{m,n\}}{3}}}+1$ which proves the theorem.\qed 
\end{proof}
This bound can be further improved by counting the number of bends in $G[S]$ more carefully. In the proof of lemma \ref{pb3}, we used the fact that the number of bends is at least $2\left(\ceil*{\frac{\min \{m,n\}}{3}}-d_3-d_4\right)-l$. In the following lemma, we improve on that:
\begin{lemma}
Consider a CDS $S$ of $G$. The number of bends in $S$ is at least $2\left( \ceil*{\frac{\min \{m,n \}}{3}}-l+1\right)$. 
\end{lemma}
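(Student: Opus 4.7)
The plan is to sharpen the endpoint-counting of vertical line segments used in the proof of Lemma~\ref{pb3} and then feed the handshake identity of Lemma~\ref{handshake} back in to trade the high-degree-join count $d_3$ for the leaf count $l$.

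First, I would WLOG assume the vertical line segments span the width of $G$, so $G[S]$ contains at least $k := \ceil*{\min\{m,n\}/3}$ vertical line segments. Restricting to those segments of length at least $2$, let $k'$ be their number. I would argue that $k' \geq k$: if not, the remaining required segments are length-$1$ (single vertices with no vertical neighbor in $S$), which are themselves interior vertices or leaves of long horizontal segments, and these long horizontal segments then span the height, reducing the problem to the symmetric argument with horizontal and vertical roles exchanged.

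Next, I would classify each endpoint of a length-$\ge 2$ vertical line. Such an endpoint has exactly one vertical neighbor in $S$, so its total degree in $G[S]$ is $1$, $2$, or $3$, corresponding to a leaf of $G[S]$, a bend, or an HD$3$-vertex of ``two horizontal, one vertical'' type (interior of its horizontal line). Degree-$4$ joins cannot occur at vertical endpoints, since they have two vertical neighbors. Writing $l_v$ for the number of leaves whose neighbor is vertical and $d_3^\star$ for the count of HD$3$-vertices of the above type, summing over the $2k'$ endpoints gives
\[
2k' \;=\; l_v + d_2 + d_3^\star,
\]
and hence $d_2 \ge 2k - l - d_3$ after the bounds $l_v \le l$ and $d_3^\star \le d_3$.

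Finally, I would invoke the handshake bound $l \le d_3 + 2d_4 + 2$ from Lemma~\ref{handshake}. When $G[S]$ is a tree this is an equality, giving $d_3 \le l-2$; substituting yields $d_2 \ge 2k - l - (l-2) = 2(k - l + 1)$, as claimed. The main obstacle is the non-tree case, in which $d_3$ may exceed $l-2$ and the above substitution fails. To handle this I would argue by induction on the number of independent cycles of $G[S]$, using the fact that each additional cycle in a grid sits on a unit-square (or larger) closed region whose corner vertices are forced to be bends: each such cycle contributes at least three new bends while decreasing $l$ by at most one, and this surplus compensates exactly the slack introduced in the handshake inequality, preserving $d_2 \ge 2(k-l+1)$ throughout.
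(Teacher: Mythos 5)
Your endpoint count is essentially the same mechanism as the paper's own proof, just organized globally instead of line by line: the paper also inspects each vertical line, notes that its connection points must be leaves, bends, or degree-$3$ joins, and pairs joins with bends or leaves (its $t_1,t_2,t_3$ bookkeeping) to reach $d_2 \geq 2\ceil*{\frac{\min\{m,n\}}{3}}-3l+4+t_1+t_2+2t_3$ before substituting $t_1+t_2+2t_3=d_3\geq l-2$. Your identity $2k' = l_v + d_2 + d_3^\star$ is a cleaner packaging of that count, and in the tree case (after splitting degree-$4$ vertices) both arguments ultimately rest on $d_3 \leq l-2$, so up to that point you are reproving the lemma the paper's way.

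The genuine gap is your treatment of the non-tree case. The assertion that each independent cycle forces at least three \emph{new bends} is false: a corner of a cycle is merely a turning point, i.e.\ a join, and nothing prevents further segments of $G[S]$ from attaching there, so the corners can all be degree-$3$ or degree-$4$ joins rather than bends. In that situation each extra cycle inflates $d_3$ (and potentially $d_3^\star$) without producing any bends at all, which pushes your inequality $d_2 \geq 2k - l - d_3^\star$ in the wrong direction; quantitatively, with $c$ independent cycles the handshake computation gives $d_3+2d_4 = l-2+2c$, so your final substitution loses an additive $2c$ that the sketched induction does not demonstrably recover. The claim that closing a cycle decreases $l$ by at most one is also unsupported (an added edge can absorb two leaves). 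To be fair, the paper's proof quietly works in the same tree regime (its bound of $(l-2)-t_3$ on the number of vertical lines containing degree-$3$ vertices presumes $d_3\leq l-2$), so your tree-case argument is on par with the published one; but the cycle induction, as stated, does not close the case it was introduced to handle. A smaller issue: the step $k'\geq k$ is asserted rather than proved; it matches the paper's implicit convention that a line segment has at least two vertices, but if singleton vertical segments are admitted, the symmetric-exchange argument you invoke needs care, since the WLOG orientation has already been fixed at that point.
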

\begin{proof}
We first assume that $G[S]$ has no vertices of degree $4$. If it does, we can just treat a vertex of degree $4$ as two vertices of degree $3$. Hence, the number of vertices of degree $3$ has to be at least $l-2$. As before, we can assume without loss of generality that the vertical lines dominate an area that spans the width of the grid and that there are $\ceil*{\frac{m}{3}}$ vertical lines. Some of these lines have one or more degree $3$ vertices and some of them have bends. 

Observe that a vertical line with only one join must contain a leaf. Every join in a vertical line must be paired with another join or be paired with a leaf as shown in Fig.~\ref{deg3}.

\begin{figure}[ht]
    \centering
    \begin{tikzpicture}
    \foreach \i in {0,...,2}
    \foreach \j in {0,...,2}{
    	\draw (\i,\j) circle(3pt);}
    \fill[black] (1,2) circle(3pt);
    \fill[black] (1,1) circle(3pt);
    \fill[black] (1,0) circle(3pt);
    \fill[black] (2,2) circle(3pt);
    \fill[black] (0,2) circle(3pt);
    \fill[black] (0,0) circle(3pt);
    \draw (0,2)--(1,2);
    \draw (1,2)--(1,1);
    \draw (1,1)--(1,0);
    \draw (1,2)--(2,2);
    \draw (0,0)--(1,0);
    \end{tikzpicture}
    \qquad
    \begin{tikzpicture}
    \foreach \i in {0,...,2}
    \foreach \j in {0,...,2}{
    	\draw (\i,\j) circle(3pt);}
    \fill[black] (1,2) circle(3pt);
    \fill[black] (1,1) circle(3pt);
    \fill[black] (1,0) circle(3pt);
    \fill[black] (2,2) circle(3pt);
    \fill[black] (0,2) circle(3pt);
    \fill[black] (0,0) circle(3pt);
    \fill[black] (2,0) circle(3pt);
    \draw (0,2)--(1,2);
    \draw (1,2)--(1,1);
    \draw (1,1)--(1,0);
    \draw (1,2)--(2,2);
    \draw (0,0)--(1,0);
    \draw (1,0)--(2,0);
    \end{tikzpicture}
    \qquad
    \begin{tikzpicture}
    \foreach \i in {0,...,2}
    \foreach \j in {0,...,2}{
    	\draw (\i,\j) circle(3pt);}
    \fill[black] (1,2) circle(3pt);
    \fill[black] (1,1) circle(3pt);
    \fill[black] (1,0) circle(3pt);
    \fill[black] (2,2) circle(3pt);
    \fill[black] (0,2) circle(3pt);
    \draw (0,2)--(1,2);
    \draw (1,2)--(1,1);
    \draw (1,1)--(1,0);
    \draw (1,2)--(2,2);
    \end{tikzpicture}
    \qquad
    \begin{tikzpicture}
    \foreach \i in {0,...,2}
    \foreach \j in {0,...,2}{
    	\draw (\i,\j) circle(3pt);}
    \fill[black] (1,2) circle(3pt);
    \fill[black] (1,1) circle(3pt);
    \fill[black] (1,0) circle(3pt);
    \fill[black] (0,2) circle(3pt);
    \draw (0,2)--(1,2);
    \draw (1,2)--(1,1);
    \draw (1,1)--(1,0);
    \end{tikzpicture}
    \caption{}
    \label{deg3}
\end{figure}
Out of all the vertices of degree $3$, let $t_1$ be the number of vertices not paired with a bend or a degree $3$ vertex, $t_2$ be the number of bend-degree $3$ vertex pairs, and $t_3$ be the number of degree $3$ vertex-  degree $3$ vertex pairs. Hence, it is clear that out of the vertical lines, at most $(l-2)-t_3$ contain vertices of degree $3$. Hence there are at least $\left( \ceil*{\frac{m}{3}}-(l-2)+t_3\right)$ vertical lines without vertices of degree $3$ and these vertical lines contain at most $l-t_1$ leaves. Consider such a vertical line. It can have only one bend if and only if it has a leaf and hence there can be at most $(l-t_1)$ of such columns. We have already counted $t_2$ bends. Hence, we can bound the number of bends:
\begin{align*}
    d_2 &\geq 2\left( \ceil*{\frac{m}{3}}-(l-2)+t_3-(l-t_1)\right)+t_2+(l-t_1)\\
    &=2\ceil*{\frac{m}{3}}-3l+4+t_1+t_2+2t_3
\end{align*}
$t_1+t_2+2t_3 \geq l-2$ as every vertex of degree $3$ belongs in at least one of the three categories mentioned and the lemma follows. \qed
\end{proof}
We now have the necessary material to prove the main result of our paper.
\begin{theorem}[Main Theorem] \label{mainthm}
    For a CDS $S$ of $G$:
    \[ |S| \geq \ceil*{\frac{mn+2\ceil*{\frac{\min \{m,n\}}{3}}}{3}} \]
\end{theorem}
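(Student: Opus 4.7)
The plan is to combine Lemma \ref{bnd} with a single unconditional lower bound on $\ell(S)+e(S)$, rather than optimize between two parametrized bounds as in Theorem 2. The three ingredients are already in place: Lemma \ref{main} gives $\ell(S) \geq 2|S|-l+d_2+3d_3+6d_4$ and $e(S)\geq 4$; Lemma \ref{handshake} gives $d_3+2d_4 \geq l-2$, hence $3d_3+6d_4\geq 3l-6$; and the preceding lemma gives the improved bend count $d_2 \geq 2\bigl(\ceil*{\min\{m,n\}/3}-l+1\bigr)$.

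The first step I would carry out is to plug both of these lower bounds into the expression for $\ell(S)$. The crucial observation is bookkeeping on the coefficients of $l$: the term $-l$ in Lemma \ref{main}, the term $-2l$ from the new bend bound, and the term $+3l$ from handshake cancel exactly. What remains is
\[
\ell(S) \;\geq\; 2|S| + 2\ceil*{\frac{\min\{m,n\}}{3}} - 4,
\]
and adding $e(S)\geq 4$ yields the clean, $l$-free bound
\[
\ell(S)+e(S) \;\geq\; 2|S| + 2\ceil*{\frac{\min\{m,n\}}{3}}.
\]

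The final step is to substitute into the identity $mn=5|S|-\ell(S)-e(S)$ of Lemma \ref{bnd} to obtain $mn \leq 3|S|-2\ceil*{\min\{m,n\}/3}$, which rearranges and (since $|S|$ is an integer) rounds up to the claimed bound.

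There is no significant obstacle beyond checking compatibility of the three inequalities. The only subtle point worth verifying is that the improved bend bound $d_2 \geq 2\bigl(\ceil*{\min\{m,n\}/3}-l+1\bigr)$ may be used in the algebra even when its right-hand side is negative: since $d_2\geq 0$ always, the inequality is then vacuously true and can still be chained with the others without introducing any side condition on $l$. Thus the derivation goes through for every CDS $S$, and in particular avoids the case analysis on $l$ that was needed for Bound 2.
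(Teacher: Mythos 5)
Your proposal is correct and is essentially identical to the paper's own proof: it combines Lemma \ref{main}, the handshake bound $3d_3+6d_4\geq 3l-6$, and the improved bend count to cancel the $l$-terms, obtains $\ell(S)+e(S)\geq 2|S|+2\ceil*{\frac{\min\{m,n\}}{3}}$, and substitutes into Lemma \ref{bnd}. Your remark that the bend inequality remains valid (vacuously) when its right-hand side is negative is a nice explicit check that the paper leaves implicit.
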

\begin{proof}
\begin{align*}
    \ell(S) \geq & 2|S|-l+d_2+3d_3+6d_4\\
    \geq & 2|S|-l+3d_3+6d_4+6+2\left(  \ceil*{\frac{\min \{m,n\}}{3}}-l+1\right)-6\\
    \geq & 2|S|+2\ceil*{\frac{\min \{m,n\}}{3}}-4
\end{align*}
We have used the fact that $d_3+2d_4+2\geq l$. As $e(S) \geq 4$, $\ell(S)+e(S) \geq 2|S|+2\ceil*{\frac{\min\{m,n\}}{3}}$, the theorem follows. \qed
\end{proof}
\subsection{Gap between lower and upper bounds}
In this section, we compare the gap between the lower and upper bounds obtained. To do that, we have to consider this case by case, for reminders $n$ and $m$ leave on division by $3$. We assume $m \leq n$. Let $L$ denote the lower bound obtained in theorem \ref{mainthm}. We let $L=\frac{mn}{3}+\frac{2m}{9}$, omitting the ceiling functions as they would only increase $L$ by at most $2$. Out of the two CDS's we constructed in section \ref{const}, $D_1$ and $D_2$, the upper bound is given by the construction of smaller size. If $m$ is divisible by $3$, $|D_1|\leq |D_2|$, and the gap between the lower and upper bounds is $\frac{m}{9}$. If $m$ is not divisible by $3$ and $n$ is however, then $|D_2| \leq |D_1|$ and the gap is $\frac{n}{3}-\frac{2m}{9}$. Similarly, we can analyse the other cases using Table \ref{tab1} and Table \ref{tab2}. The new lower bound is closest to the constructions in the case that $m$ is divisible by $3$.
\renewcommand{\arraystretch}{1.5}
\begin{table}[ht]
    \centering
    \begin{tabular}{|c|c|c|c|}
        \hline
         $m\mod{3}$ & $|D_1|$ &$|D_1|-\frac{mn}{3}$ & $|D_1|-L$   \\
         \hline
         $0$ & $\frac{mn}{3}+\frac{m}{3}$ & $\frac{m}{3}$ &  $\frac{m}{9}$\\
         \hline
         $1$ & $\frac{mn}{3}+\frac{m}{3}+\frac{2n}{3}-\frac{4}{3}$ & $\frac{m}{3}+\frac{2n}{3}-\frac{4}{3}$& $\frac{m}{9}+\frac{2n}{3}-\frac{4}{3} $\\
         \hline
         $2$ & $\frac{mn}{3}+\frac{m}{3}+\frac{n}{3}-\frac{2}{3}$ & $\frac{m}{3}+\frac{n}{3}-\frac{2}{3}$& $\frac{m}{9}+\frac{n}{3}-\frac{2}{3}$\\
         \hline
    \end{tabular}
    \caption{Gaps between $|D_1|$ and $L$}
    \label{tab1}
\end{table}
\begin{table}[ht]
    \centering
    \begin{tabular}{|c|c|c|c|}
        \hline
         $n \mod 3$ & $|D_2|$ &$|D_2|-\frac{mn}{3}$& $|D_2|-L$  \\
         \hline
         $0$ & $\frac{mn}{3}+\frac{n}{3}$ & $\frac{n}{3}$ & $\frac{n}{3}-\frac{2m}{9}$ \\
         \hline
         $1$ & $\frac{mn}{3}+\frac{n}{3}+\frac{2m}{3}-\frac{4}{3}$& $\frac{n}{3}+\frac{2m}{3}-\frac{4}{3}$& $\frac{n}{3}+\frac{4m}{9}-\frac{4}{3}$ \\
         \hline
         $2$ & $\frac{mn}{3}+\frac{m}{3}+\frac{n}{3}-\frac{2}{3}$ & $\frac{m}{3}+\frac{n}{3}-\frac{2}{3}$& $\frac{m}{9}+\frac{n}{3}-\frac{2}{3}$\\
         \hline
    \end{tabular}
    \caption{Gaps between $|D_2|$ and $L$}
    \label{tab2}
\end{table}
\section{Conclusions and further research}
In this paper, we come up with improved lower bounds on the connected domination number of a grid. The question of finding a closed form expression however, remains open. We have broadly used the following approach to prove lower bounds on $|S|$. Using the fact that $G$ is a grid graph, we obtained some structural results for any connected set that dominates $G$, which lead to lower bounds on the number of bends, vertices of degree $3$ and vertices of degree $4$ in $G[S]$. We then used lemma \ref{main} to get lower bounds on $|S|$. This approach however, does not capture the full picture. Consider Fujie's construction detailed in section \ref{const}. There are no bends or vertices of degree $4$ and the number of vertices of degree $3$ is $\ceil*{\frac{m}{3}}$ or $\ceil*{\frac{n}{3}}$, and our techniques have already accounted for this. There is still a gap between our lower bound and this upper bound because $e(S)=\ceil*{\frac{m}{3}}+2$ and $\ceil*{\frac{n}{3}}+2$ for these constructions, while we have used a lower bound of $4$ for $e(S)$. While this is the best possible lower bound for $e(S)$ separately, it might be possible to obtain better lower bounds for $\ell(S)+e(S)$ by trying to lower bound the sum of two quantities, rather than lower bound each quantity separately as we have done. 

An approach to this problem which we have not pursued here would be to design an algorithm or an approximation algorithm that returns the size of the minimum connected dominating set of an $n \times m$ grid in time polynomial in $n$ and $m$. It is important to note that an approximation algorithm would lead to an \emph{upper bound} on the connected domination number of the grid, while our work has focused on lower bounds. The constructions described in section \ref{const} for example, would lead to a trivial approximation algorithm. The gap between our current lower and upper bounds is linear in $m$ and $n$, which means that this would be asymptotically better than any $(1+\epsilon)$-approximation algorithm (the input size is $O(mn)$). A non-constructive approach to obtaining upper bounds, for example using the probabilistic method could also be tried. 

Our approach has been completely analytical. In \cite{griddomset1} the authors used a computational approach to answer the analogous questions about the domination number of grids, using dynamic programming algorithms to calculate the minimum value of a similar loss function near the boundary of the grid. The connected dominating set problem is one of a more `global' nature than the dominating set problem, as it involves connectivity as a constraint. This entails a very different set of challenges and a computational approach to the problem would likely require new techniques.

%
%
%
\bibliographystyle{splncs04}
\bibliography{gridgraph}
\end{document}